\newtheorem{theorem}{Theorem}
\def\beq{\begin{equation}\begin{aligned}}
\def\eeq{\end{aligned}\end{equation}}
\def\beqn{\begin{equation*}\begin{aligned}}
\def\eeqn{\end{aligned}\end{equation*}}
\begin{document}

\title{All-Pay Auctions with Different Forfeits}
\author{Benjamin Kang,}
\affiliation{MIT PRIMES, Massachusetts Institute of Technology, Cambridge, MA 02139 }
\author{James Unwin}
\affiliation{Department of Physics,  University of Illinois Chicago, Chicago, IL 60607, USA}
\affiliation{Department of Physics, University of California, Berkeley \& Theoretical Physics Group, LBNL \& Mathematics Sciences Research Institute, Berkeley, CA 94720, USA}

\begin{abstract}
In an auction, each party bids a certain amount, and the one who bids the highest is the winner. Interestingly, auctions can also be used as models for other real-world systems. In an all pay auction all parties must pay a forfeit for bidding. In the most commonly studied all-pay auction, parties forfeit their entire bid, and this has been considered as a model for expenditure on political campaigns. Here, we consider a number of alternative forfeits that might be used as models for different real-world competitions, such as preparing bids for defense or infrastructure contracts.

\vspace{5mm}

\noindent
\textit{Keywords:} All-pay Auction, Auctions, Economic Model, Competition
\end{abstract}

\maketitle



\section{1.~Introduction}

For thousands of years, auctions have been used as a method for selling objects, and four main types of auctions have gained prominence. The first type of auction is the English auction. In this type of auction, the seller continually raises the price of the item until only one person is willing to pay, and the item is sold at this price. A second type of auction is the Dutch auction. In this auction, the seller sets an extremely high price and continually lowers it until a bidder is willing to pay. A third type of auction is the first-price sealed-bid auction. In this type, bidders all bid simultaneously, and the bidder with the highest bid wins and pays that bid. A fourth type of auction is the second-price sealed-bid auction, where bidders simultaneously bid, and the bidder with the highest bid wins but pays the second-highest bid. These four types of auctions have already been analyzed extensively in the literature. 

Although the four auctions previously described are the most popular ones, there are many other variations. Of particular interest are the `war of attrition' and all-pay auction, since in both of these, the bidders who do not win must pay a forfeit. In the former, the winner pays the second-highest bid, while in the latter, the winner pays the highest bid. Notably, conflicts among animals \cite{Bishop} can be represented by the war of attrition. On the other hand all-pay auctions have been used to model the arms race \cite{ONeill} and war outcomes \cite{Hodler,2022LSRS...15..145K}, also  rent-seeking scenarios such as lobbying \cite{Baye} or competition with sunk investments \cite{Siegel,Siegel2009}.

	In a seminal paper, Milgrom \& Weber  \cite{Milgrom}, building on earlier work \cite{Capen,Cassady,Myerson,Vickrey,Wilson}, demonstrated mathematically the equivalence of a number of auction systems under certain assumptions and derived the expected selling prices and optimal bidding strategies. A trio of influential papers, which followed, due to Leininger \& Amann \cite{Amann0,Amann}  and Krishna \& Morgan \cite{Krishna}, expanded on the results of Milgrom \& Weber by calculating the bidder strategies for all-pay auctions and the war of attrition scenario. Additionally, there were also multiple other papers that subsequently analyzed different variants on all-pay auctions, such as the behavior of bidders in an all-pay auction with incomplete information  \cite{Amegashie,Noussair}, and Che \& Gale \cite{Che} studied the relationship between all-pay and first-price auctions. 


In the most commonly studied models of all pay auctions, parties forfeit their entire bid. An important comparison to a real-world situation where both parties pay the full cost of their bids but only the winner profits is expenditure on political campaigns  \cite{Snyder}. One can also explore variants in which the non-winning bidders pay different amounts based on all of the bids. Possibilities which we will discuss here include bidders paying a constant entrance fee or paying a fraction of their bid. These auctions are not as prominent in commercial settings but can be used as models for many other systems, some of which we will highlight.

 Our analysis is related to the contest literature initiated by Baye, Kovenock and de Vries, in particular their complete-information characterisation of the all-pay auction \cite{Baye1996}, its application to lobbying and rent seeking~\cite{Baye}, and their treatment of rank-order contests with spillovers~\cite{Baye2012}. The present paper develops complementary results in a symmetric Bayesian setting with affiliated private signals and linear forfeiture rules. See also Dechenaux, Kovenock \& Sheremeta~\cite{DechenauxSurvey}, which surveys a wide range of contest and all-pay auction models. 
We also highlight Siegel~\cite{Siegel2009} who characterised equilibria in a very general class of complete-information contests with continuous, non-decreasing cost functions and identical prizes. 
This work is also related to more recent literature on contests and all-pay auctions with reimbursements or refunds, including Minchuk~\cite{Minchuk} and Liu \& Yong~\cite{Liu}, which analysed all-pay auctions with winner reimbursement under incomplete information, and Chowdhury et al \cite{Chowdhury}, which examined optimal reimbursement schemes in contests.  These works have typically used general cost-of-effort functions and focus on the choice of reimbursement.  In contrast, here we consider the Milgrom-Weber affiliated-signals framework  \cite{Milgrom} and study how forfeiture rules applied to bids (rather than effort costs) permit a unified analysis of all-pay auctions.

 	 In this paper, we first outline some relevant background in Section 2 and then proceed to extend the results of Krishna and Morgan \cite{Krishna} to all-pay auctions with different forfeits for the losing bidders. In Section 3 we examine auctions with an entrance fee in addition to paying the bid, both when the fee is returned to the winner and when it is not. Section 4 explores the case in which the forfeit function is a constant fraction of the original bid. For these auctions, we derive an expression for the symmetric bidding strategy in each case. Then, between these types of auctions, we compare the revenue made for the seller. In Section 5 we consider the approximate behavior of bidders with an exponential forfeit as the bid grows larger.  Section 6 provides a summary and a discussion of possible future research directions.

Beyond their mathematical tractability, the forfeit functions we study capture a range of economically relevant situations.  
Fractional forfeits arise naturally when losers bear only part of their expenditure. Examples include settings where a fraction of lobbying expenses is covered by third parties, litigation in which some costs are shifted or reimbursed, or contests in which non-winning projects retain some salvage value (for example, technologies that can be repurposed). In such environments, the fraction $\beta$ measures the degree of resource dissipation, and we show how equilibrium behavior and revenues interpolate between first-price and classic all-pay auctions as $\beta$ varies.
Exponential forfeits approximate environments with a sharply convex cost of effort or penalties that escalate steeply with the level of outlays. One can consider regulatory or reputational sanctions that become disproportionately severe beyond certain thresholds, or budget environments in which marginal financing costs increase rapidly with scale. In this case, we show that equilibrium bids become extremely steep and, in the limit, are essentially dictated by the distribution of opponents' signals rather than the underlying valuations.

 \vspace{-3mm}
\section{2.~Background: All-Pay Auctions}
\vspace{-3mm}

	Suppose there are $n$ bidders all competing for a single object. 
	Each bidder $i$ observes a private signal about the object. We collect these in the vector $X = (X_1,\dots,X_n)$, where the components represent the information known by each bidder $i$. 
	 Also introduce $S = (S_1, S_2, \dots S_m)$ which represents additional variables that affect the value of the object but are only known to the seller. Then suppose that there is a nonnegative finite function $u$ such that $u(S, X_i, \{X_j\}_{j \neq i}) = V_i$, giving the value of the object to bidder $i$. The payoff for the winner is $V_i - b$ where $b$ is the price paid.

	An independent private values model is an auction in which bidders are risk-neutral and only know the value of the object to themselves, with values taken from a continuous distribution. Milgrom and Weber  \cite{Milgrom} developed such a model applicable to symmetric auctions. The independent private values case arises as a special case when the seller's signal $S$ is degenerate, and the components of $X$ are independent. In what follows we adopt the more general affiliated-signals environment in order to remain directly comparable to Milgrom and Weber \cite{Milgrom} and Krishna and Morgan \cite{Krishna}, whose results we extend.

Furthermore, denote by $f(S, X_1,\cdots, X_n)$ the joint probability distribution of the random variables, which is symmetric in the last $n$ variables. Note that this function $f$ obeys the `affiliation inequality'  \cite{FKG} given by $f(z \vee z') f(z \wedge z') \geq f(z) f(z')$ where $z \vee z'$ is the component-wise maximum and $z \wedge z'$ is the component-wise minimum. This implies that it is more likely for the variables to be close to each other, rather than farther apart.
Then define $Y_1 = {\rm max}\{X_j\}_{j \neq 1}$ and let $f_{Y_1}(\cdot |x)$ be the conditional density of $Y_1$ with $x=X_1$. We denote the corresponding cumulative distribution as $F_{Y_1}(\cdot |x)$. Note the cumulative distribution of a function $f$ at a point $y$ is defined as the probability that the result is at most $f(y)$ and can be expressed as 
 \beqn F_{Y_1}(y|x) = \int_{-\infty}^y f_{Y_1}(s|x){\rm d}s.\eeqn 

Here we adopt the affiliated private-values framework of Milgrom and Weber~\cite{Milgrom}. 
Each bidder $i$ observes a private signal $X_i$ about the object, and the seller observes additional variables $S$ that affect valuations. 
The joint density $f(S,X_1,\dots,X_n)$ is assumed to satisfy the affiliation condition  \cite{FKG}, which implies positive dependence among bidders' signals. 
Independent private values arise as a special case when $(X_1,\dots,X_n)$ are independent.
This allows us to compare all-pay and first-price auctions within a common framework and to use the monotonicity properties implied by affiliation in our revenue comparisons. 
Independent private values are nested as a special case.

For joint density $f(S,X_1,\dots,X_n)$ satisfying the affiliation condition, Milgrom and Weber showed that the following monotonicity property holds  \cite{Milgrom}:

\vspace{5mm}
 {\bf  Lemma 1} (MW){\bf.} Suppose $f$ is affiliated, then $\frac{F_{Y_1}(x|z)}{f_{Y_1}(x|z)}$ is non-increasing in $z$. \newline

\vspace{-7mm}
 \begin{proof}  The proof is instructive, and thus we reproduce it in Appendix A.
 \end{proof}

	The above model has been used to study both first-price auctions \cite{Milgrom} and all-pay auctions \cite{Krishna}, and will also be the focus of this work. 
	In the classic all-pay auction in which the losers forfeit their bid, one can define a payoff function $W$ of the following form  \cite{Krishna}
	\begin{equation*}\begin{aligned} W_i = \begin{cases}
	V_i - b_i & b_i > {\rm max}_{j \neq i} b_j \\
	-b_i & b_i < {\rm max}_{j \neq i} b_j \\
	\frac{V_i}{\#\{k:b_k = b_i\}} - b_i & b_i = {\rm max}_{j \neq i} b_j
	\end{cases}~.\end{aligned}\end{equation*} 

	Below, we will outline the heuristic for finding the symmetric equilibrium bidding strategy from \cite{Krishna}. Suppose bidders $j \neq 1$ follow the symmetric, increasing equilibrium strategy $\alpha$ and bidder 1 bids $b$ with $X_1 = x$. Denote by  $v$ the expected value $E[\cdot]$ of the object to bidder $1$, defined by 
	\beq \notag
	v(x,y) = E[V_1 | X_1 = x, Y_1 = y].
	\eeq
	Then the expected payoff of bidder $1$, denoted $\Pi(b,x)$, is given by 
\beqn \Pi(b,x) = \int_{-\infty}^{\alpha^{-1}(b)} v(x,y) f_{Y_1} (y | x){\rm d}y - b. \eeqn 
We can obtain the maximum payoff for bidder 1 with respect to the bid by finding when the derivative with respect to $b$ vanishes, which implies \beqn v(x,\alpha^{-1}(b))f_{Y_1}(\alpha^{-1}(b)|x)\frac{1}{\alpha'(\alpha^{-1}(b))} - 1 = 0.\eeqn 
 At symmetric equilibrium, bidder $1$  also follows the bidding strategy $\alpha$, so $\alpha(x) = b$, which gives
\beq \label{eq2} \alpha'(x) = v(x,x)f_{Y_1}(x|x). \eeq
 Integrating this equation, one obtains that the symmetric equilibrium has  the following form
\beqn \alpha(x) = \int_{-\infty}^xv(t,t)f_{Y_1}(t|t){\rm d}t.\eeqn 
	However, this is only a necessary condition for the bidding strategy to be a symmetric equilibrium.
 
	In \cite{Krishna} Krishna \& Morgan the following theorem is proved, which establishes the above as the symmetric equilibrium bidding strategy for bidders in an all-pay auction:

\vspace{2mm}
 \begin{theorem}[KM]
If $ v(x,y)f_{Y_1}(y|x)$ is increasing in x, then the formula for symmetric equilibrium function is given by \beqn\alpha(x) = \int_{-\infty}^xv(t,t)f_{Y_1}(t|t){\rm d}t. \eeqn
 \end{theorem}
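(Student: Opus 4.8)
The plan is to argue by verification. We have already seen that $\alpha(x)=\int_{-\infty}^x v(t,t)f_{Y_1}(t|t)\,{\rm d}t$ is forced by the first-order condition, so the content of the theorem is that, when all bidders $j\neq1$ play $\alpha$, bidder $1$ with signal $x$ has no profitable deviation. Since $\alpha$ is strictly increasing, every bid that bidder $1$ might profitably make can be written as $\alpha(z)$ for some $z$ in the support of the signals: a bid above the top of $\alpha$'s range is dominated by the top bid (it wins with the same probability at strictly higher cost), and a bid below the bottom of the range is dominated by the bottom bid. I would dispatch these boundary deviations first, and then reduce the problem to showing that $z\mapsto\Pi(\alpha(z),x)$ is maximized at $z=x$, where, using the expression for $\Pi$ from the background section,
\beqn
\Pi(\alpha(z),x)=\int_{-\infty}^{z} v(x,y)\,f_{Y_1}(y|x)\,{\rm d}y-\alpha(z).
\eeqn

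The key step is to rewrite the payoff gap in integral form. Using $\alpha'(t)=v(t,t)f_{Y_1}(t|t)=\psi(t,t)$ from (\ref{eq2}) together with $\psi(x,y)=v(x,y)f_{Y_1}(y|x)$, the fundamental theorem of calculus gives
\beqn
\Pi(\alpha(x),x)-\Pi(\alpha(z),x)=\int_{z}^{x}\big(v(x,y)f_{Y_1}(y|x)-v(y,y)f_{Y_1}(y|y)\big)\,{\rm d}y=\int_{z}^{x}\big(\psi(x,y)-\psi(y,y)\big)\,{\rm d}y.
\eeqn
Now invoke the hypothesis that $\psi(x,y)$ is increasing in $x$. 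If $z<x$, then for every $y\in(z,x)$ we have $y<x$, hence $\psi(y,y)\le\psi(x,y)$, the integrand is nonnegative, and the gap is $\ge0$. If $z>x$, rewrite the quantity as $-\int_{x}^{z}\big(\psi(x,y)-\psi(y,y)\big)\,{\rm d}y$; now for $y\in(x,z)$ we have $y>x$, so $\psi(x,y)\le\psi(y,y)$, the bracket is $\le0$, and again the gap is $\ge0$. In either case $\Pi(\alpha(z),x)\le\Pi(\alpha(x),x)$, so $\alpha$ is a best response to itself, i.e.\ a symmetric equilibrium, and the displayed formula is the symmetric equilibrium strategy.

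The conceptual heart of the argument — and the step I expect to require the most care — is precisely this upgrade from the local first-order condition (which only says $z=x$ is a critical point) to global optimality: the monotonicity of $\psi$ in its first argument is exactly what makes the integrand in the gap formula have a favorable sign on both sides of $z=x$. The remaining delicate points are bookkeeping: justifying that $\alpha$ is well defined and strictly increasing (finiteness of the integral and positivity of $v(t,t)f_{Y_1}(t|t)$, so that the parametrization $b=\alpha(z)$ loses no deviations), and handling out-of-support bids when $X$ has unbounded support, where "winning with probability one" must be replaced by a limiting comparison showing that pushing a bid past the effective range only adds cost. Note that Lemma 1 (MW) is not needed here — it enters the companion first-price analysis — so the proof rests solely on the stated monotonicity of $\psi$.
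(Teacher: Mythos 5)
Your verification argument is correct and is essentially the same route as the proof the paper relies on (Krishna--Morgan's, reproduced in generalized form in the paper's own proof of Theorem 3 and specialized here to the pure all-pay case): your payoff-gap formula $\int_z^x\big(\psi(x,y)-\psi(y,y)\big)\,{\rm d}y$ is just the integrated version of the observation that $\frac{\partial}{\partial b}\Pi(\alpha(z),x)$ has the sign of $x-z$, with monotonicity of $\psi$ in its first argument doing all the work, and you are right that Lemma 1 is not needed here since the $F_{Y_1}$ term that requires it only appears for $\beta<1$. Beyond the boundary and regularity bookkeeping you already flag, there is no gap.
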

 
 \noindent
	Furthermore, Milgrom \& Weber \cite{Milgrom} showed that  the symmetric equilibrium bidding strategy in a first-price auction obeys a similar statement:

\vspace{2mm}
 \begin{theorem}[MW]
 	The function of symmetric equilibrium for a first-price auction is given by \beqn\alpha(x) = \int_{-\infty}^x v(s,s)\frac{f_{Y_1}(s|s)}{F_{Y_1}(s|s)} \exp \left(\int_x^s \frac{f_{Y_1}(t|t)}{F_{Y_1}(t|t)}{\rm d}t \right) {\rm d}s.\eeqn
 \end{theorem}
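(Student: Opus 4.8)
\emph{Proof plan.} The idea is to run the same variational argument that underlies Theorem (KM), with one structural change: in a first-price auction the winner pays only upon winning, so the bid enters the expected payoff multiplied by the probability of winning, $F_{Y_1}$. Consequently the stationarity condition is no longer a bare integral but a linear first-order ODE, and the stated formula is precisely its solution.

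First I would fix the increasing candidate $\alpha$ for all bidders $j\neq 1$ and consider bidder $1$ with $X_1=x$ submitting the bid $\alpha(z)$ for an arbitrary ``report'' $z$. Since $\alpha$ is increasing, bidder $1$ wins iff $Y_1<z$, in which case they pay $\alpha(z)$ and otherwise pay nothing, so
\beqn \Pi(z,x) \;=\; \int_{-\infty}^{z} v(x,y)\,f_{Y_1}(y|x)\,{\rm d}y \;-\; \alpha(z)\,F_{Y_1}(z|x). \eeqn
Differentiating in $z$ and requiring that $z=x$ be stationary (a necessary condition for symmetric equilibrium) gives
\beqn \alpha'(x)\,F_{Y_1}(x|x) + \alpha(x)\,f_{Y_1}(x|x) \;=\; v(x,x)\,f_{Y_1}(x|x), \eeqn
equivalently $\alpha' + \big(f_{Y_1}(x|x)/F_{Y_1}(x|x)\big)\,\alpha = v(x,x)\,f_{Y_1}(x|x)/F_{Y_1}(x|x)$. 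Multiplying by the integrating factor $\exp\!\big(\int^{x}\tfrac{f_{Y_1}(t|t)}{F_{Y_1}(t|t)}{\rm d}t\big)$ and integrating up from the lower end of the support, where $\alpha$ must vanish, reproduces exactly the asserted expression, identifying it as the unique candidate.

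It then remains to verify that this $\alpha$ really is a symmetric equilibrium, i.e. that $\Pi(z,x)$ is maximized at $z=x$. Substituting $\alpha'(z) = [v(z,z)-\alpha(z)]\,f_{Y_1}(z|z)/F_{Y_1}(z|z)$ back into $\partial_z\Pi$ and collecting terms yields
\beqn \partial_z\Pi(z,x) \;=\; f_{Y_1}(z|x)\Big( [v(x,z)-\alpha(z)] \;-\; [v(z,z)-\alpha(z)]\,\frac{f_{Y_1}(z|z)\,F_{Y_1}(z|x)}{F_{Y_1}(z|z)\,f_{Y_1}(z|x)} \Big). \eeqn
For $z<x$, affiliation makes $v(\cdot,z)$ non-decreasing in its first slot, so $v(x,z)\ge v(z,z)$, while Lemma 1 (MW) gives $F_{Y_1}(z|x)/f_{Y_1}(z|x)\le F_{Y_1}(z|z)/f_{Y_1}(z|z)$, hence the parenthesis is $\ge 0$; for $z>x$ both inequalities reverse and it is $\le 0$. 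Here one also uses that the ``markup'' is nonnegative, $v(z,z)-\alpha(z)\ge 0$, which follows directly from the formula, as it expresses $\alpha(z)$ as a nonnegatively weighted average of the values $v(s,s)$ for $s\le z$, each bounded by $v(z,z)$ by affiliation. Thus $\Pi(\cdot,x)$ increases and then decreases with its peak at $z=x$; a bid above the range of $\alpha$ merely overpays and a bid below it wins nothing at zero gain, so no deviation is profitable, and $\alpha$ is the symmetric equilibrium.

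The heart of the argument, and the step I expect to require the most care, is this last sign analysis --- in particular, chaining Lemma 1 (MW) with the affiliation-induced monotonicity of $v$ in its first argument, and confirming $v(z,z)\ge\alpha(z)$ so that the cross term is correctly controlled. The ODE derivation and the integrating-factor solution are routine, and the endpoint behavior of the support together with deviations outside the image of $\alpha$ amount to minor bookkeeping.
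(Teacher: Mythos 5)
Your argument is correct, and it is worth noting that the paper itself does not prove this statement: Theorem (MW) is quoted as background from Milgrom and Weber (their Theorem 14), so there is no in-paper proof to compare against line by line. What the paper does contain is the proof of its Theorem 3 for the fractional-forfeit equilibrium, which at $\beta=0$ reduces exactly to the first-price case, and your proposal follows the same template: derive the stationarity condition, observe it is a linear first-order ODE, solve by integrating factor, then verify optimality by showing $\partial_z\Pi(z,x)$ has the sign of $x-z$ using Lemma 1 together with monotonicity of $v$. Your verification step is in fact slightly more complete than the paper's sketch, since you make explicit the needed bound $v(z,z)\ge\alpha(z)$; just be aware that this uses two standing assumptions of the framework rather than ``the formula alone'' --- namely $v\ge 0$ (from nonnegativity of $u$) and monotonicity of $v(s,s)$ in $s$ (an affiliation consequence, not restated in this paper), and that the total weight $\int_{-\infty}^{z}\frac{f_{Y_1}(s|s)}{F_{Y_1}(s|s)}\exp\bigl(\int_z^s\frac{f_{Y_1}(t|t)}{F_{Y_1}(t|t)}\,{\rm d}t\bigr){\rm d}s$ is at most $1$, which is a one-line computation you should spell out. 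The remaining loose end, treated only in passing in your last paragraph, is the case of a discontinuous or non-differentiable $\alpha$ and deviations outside its range; the paper's Theorem 3 proof (following Milgrom--Weber) handles this by showing a jump would force $F_{Y_1}(x|x+\epsilon)=0$, and a complete write-up of Theorem (MW) would need the analogous remark.
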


\noindent
Additionally, from the above theorems, Krishna \& Morgan  \cite{Krishna} proved that if $ v(x,y)f_{Y_1}(y|x)$ is increasing in $x$, then the expected revenue from an all-pay auction is at least as great as that from a first-price auction. These results are important to sellers, as they can help them determine the type of auction to use and the amount they can expect to receive. 
In the remainder of this work, we will explore how the symmetric equilibria $\alpha(x)$ of all pay auctions are impacted by changes to the forfeits that the losing bidders are required to pay by studying changes to the forfeit function~$W$.
 
\vspace{-3mm}
\section{3.~Auctions with Constant Entrance Fees}
\vspace{-3mm}

 In this section, we investigate the effects of introducing a constant entrance fee to an all-pay auction, a possibility mentioned in passing in \cite{Krishna}.
 If a constant entrance fee $c$ is paid by all bidders independently of the auction outcome, the payoff of bidder $i$ can be written as
\beq
  W_i^{\text{fee}} = W_i^{\text{baseline}} - c,
\eeq
where $W_i^{\text{baseline}}$ is the payoff in the classic all-pay auction without a fee.  The fee is introduced as an additive constant and thus leaves the best responses and the symmetric equilibrium bidding function unchanged.  This is the reason that such fees are typically mentioned only in passing in the literature \cite{Krishna}. Thus we focus on the more interesting case in which the entrance fee is rebated to the winning bidder (as might be a model for certain gambling scenarios).

In this case, the expected payoff is given by 
 \beqn W_i = \begin{cases}
 	V_i - b_i & b_i > {\rm max}_{j \neq i} b_j \\
 	-b_i - c & b_i < {\rm max}_{j \neq i} b_j \\
 	\frac{V_i}{\#\{k:b_k = b_i\}} - b_i & b_i = {\rm max}_{j \neq i} b_j
 \end{cases}~.\eeqn
 Suppose bidders $j \neq 1$ follow symmetric increasing equilibrium strategy $\alpha$. The expected payoff of bidder $1$ making a bid $b$ can be expressed as follows
 \beqn \Pi(b,x) = \int_{-\infty}^{\alpha^{-1}(b)} v(x,y) f_{Y_1} (y | x){\rm d}y - b - c(1-F_{Y_1}(\alpha^{-1}(b)|x)),\eeqn
and thus the payoff is maximized for
 \beqn  v(x,\alpha^{-1}(b))f_{Y_1}(\alpha^{-1}(b)|x)\frac{1}{\alpha'(\alpha^{-1}(b))} - 1 + c\cdot f_{Y_1}(\alpha^{-1}(b)|x)\frac{1}{\alpha'(\alpha^{-1}(b))} = 0.
\eeqn 
Similar to the derivation of eq.~\eqref{eq2}, bidder 1 follows the  strategy $ \alpha(x) = b$, thus one has that
 \beqn  
 \alpha'(x) = (v(x,x) + c)f_{Y_1}(x|x)~.
\eeqn
Integrating yields the equilibrium strategy 
  \beqn  
  \alpha(x) &= \int_{-\infty}^x(v(t,t) + c)f_{Y_1}(t|t){\rm d}t
 = \int_{-\infty}^xv(t,t) f_{Y_1}(t|t){\rm d}t + c\int_{-\infty}^x f_{Y_1}(t|t){\rm d}t.
 \eeqn 
	This implies that when the entrance fee is returned to the winner, the bidding strategy changes. This occurs because there is no longer symmetry in the forfeits. Notably, it can be observed that without this symmetry, the optimal bid amount increases as the entrance fee is raised.

To see that $\alpha$ is a symmetric equilibrium, define $\tilde v(x,y) = v(x,y) + c$, which is increasing in $x$ whenever $v$ is. 
The environment with a rebated fee is equivalent to an all-pay auction in which bidder $i$'s value is $\tilde v(X_i,Y_i)$, since the winner recovers the entrance fee. 
Applying Theorem 1 (of Krishna-Morgan)  to $\tilde v$ yields that the symmetric equilibrium bidding strategy is
\beq
  \alpha(x) = \int_{-\infty}^x \tilde v(t,t) f_{Y_1}(t|t)\,{\rm d}t,
\eeq
which coincides with the expression derived above. 
Thus $\alpha$ is indeed the unique symmetric equilibrium in this setting.

Moreover, let us write $\alpha_c$ for the equilibrium bidding function when the entrance fee is $c$, we have
\beq
\alpha_c(x) = \alpha_0(x) + c \int_{-\infty}^x f_{Y_1}(t|t)\,{\rm d}t,
\eeq
It can be seen that $\alpha_c(x)$ is strictly increasing in $c$ for all $x$ in the support of $X_1$. 
Consequently, both the expected bid and the expected total payment of the bidders are increasing in the size of the rebated fee. 
Economically, when the entrance fee is rebated, it becomes part of the stake that is forfeited only upon losing, which strengthens incentives to bid aggressively compared to the baseline all-pay auction without such a rebate.

Writing $F(x) := F_{Y_1}(x\mid x)$, the expected payment of a bidder with signal $x$ in the rebated-fee auction is
\beq
e_c(x) &:= F(x)\,\alpha_c(x) + (1-F(x))(\alpha_c(x) + c)\\
&       = \alpha_c(x) + c(1-F(x)).
\eeq
Substituting $\alpha_c(x) = \alpha_0(x) + c\int_{-\infty}^x f_{Y_1}(t\mid t)\,{\rm d}t$ then yields
\beq
e_c(x) = \alpha_0(x)
+ c\int_{-\infty}^x f_{Y_1}(t\mid t)\,{\rm d}t
+ c(1-F(x)).
\eeq
Observe that $\partial e_c(x)/\partial c > 0$ on the support of $X_1$.  Thus, for every type the expected payment is strictly increasing in the size of the rebated fee. The seller, therefore, always prefers a higher fee: the fee both directly raises payments from losing bidders and indirectly increases bids by strengthening the effective stake that is lost upon losing.

 \vspace{-3mm}
\section{4.~Auctions with Fractional Forfeits}
\vspace{-3mm}

 We next analyze the effects of having the forfeit be a fraction $\beta\in(0,1)$ of each party's bid, therefore
 \beq 
 W_i = \begin{cases}
 	V_i - b_i & b_i > {\rm max}_{j \neq i} b_j \\
 	-\beta b_i  & b_i < {\rm max}_{j \neq i} b_j \\
 	\frac{V_i}{\#\{k:b_k = b_i\}} - b_i & b_i = {\rm max}_{j \neq i} b_j
 \end{cases}~.\eeq
 We next analyse the effects of having the forfeit be a fraction $\beta \in [0,1]$ of each party's bid, so that losers pay $\beta b_i$.
Values of $\beta$ outside this range would correspond to subsidies ($\beta<0$) or penalties exceeding the bid amount ($\beta>1$), which we do not consider here.\footnote{Extensions beyond  $\beta \in [0,1]$ are non-trivial since the Kernel monotonicity arguments inherited from Krishna-Morgan no longer hold; as such, these cases lie beyond the scope of this work.}
Methodologically, our analysis follows the Milgrom-Weber framework and extends the Krishna-Morgan characterisation of symmetric equilibrium bids in all-pay auctions \cite{Krishna}. 
The introduction of a fractional forfeiture parameter $\beta$ yields a class of auctions that contains both first-price auctions ($\beta=0$) and classic all-pay auctions ($\beta=1$) as subcases. 
  
Suppose bidders $j \neq 1$ follow the symmetric increasing equilibrium strategy $\alpha$, then the expected payoff of bidder $1$ is 
 \beq \label{eq3} \Pi(b,x) = \int_{-\infty}^{\alpha^{-1}(b)} v(x,y) f_{Y_1} (y | x){\rm d}y - bF_{Y_1}(\alpha^{-1}(b)|x) - (\beta b )(1-F_{Y_1}(\alpha^{-1}(b)|x)).\eeq 
 It follows that the bid $b$ that maximizes eq.~\eqref{eq3} satisfies the following condition
  \beq\notag & v(x,\alpha^{-1}(b))f_{Y_1}(\alpha^{-1}(b)|x)\frac{1}{\alpha'(\alpha^{-1}(b))} - F_{Y_1}(\alpha^{-1}(b)|x) - bf_{Y_1}(\alpha^{-1}(b)|x)\frac{1}{\alpha'(\alpha^{-1}(b))} \\  
 & \hspace{56mm} - \beta(1 - F_{Y_1}(\alpha^{-1}(b)|x)) + (\beta b )\cdot f_{Y_1}(\alpha^{-1}(b)|x)\frac{1}{\alpha'(\alpha^{-1}(b))} = 0.\eeq 
Multiplying both sides of the above equation by $\alpha'(\alpha^{-1}(b))$ then taking $ \alpha(x) = b$ gives
 \beq 
 \beta \alpha'(x) + (1 - \beta)\alpha'(x)F_{Y_1}(x|x) + (1-\beta)\alpha(x)f_{Y_1}(x|x)= v(x,x) f_{Y_1}(x|x).
 \eeq
	Observe that the above is a first-order differential equation in $\alpha(x)$. Solving this equation, we obtain
 \beqn  
 &\alpha(x) = \int_{-\infty}^x v(s,s) \frac{{\rm d}L(s,x)}{1-\beta}, 
\eeqn 
where 
\beq 
L(s,x) = \exp \left((1-\beta)\int_x^s \frac{f_{Y_1}(t|t)}{\beta + (1-\beta)F_{Y_1}(t|t)}{\rm d}t\right).
\label{asd}  \eeq
	Moreover, this can then be rewritten as follows
 \beq  & \alpha(x) = \int_{-\infty}^x v(s,s)\frac{f_{Y_1}(s|s)}{\beta + (1-\beta)F_{Y_1}(s|s)} \exp \left(-(1-\beta)\int_s^x \frac{f_{Y_1}(t|t)}{\beta + (1-\beta)F_{Y_1}(t|t)}{\rm d}t \right) {\rm d}s. 
 \label{qqq}
 \eeq
	It is interesting that for $\beta = 0$ we obtain the strategy for first-price auctions \cite{Milgrom}, and taking $\beta = 1$ yields the strategy for the classic all-pay auction in \cite{Krishna}.
 \begin{theorem}
 	When $\alpha(x)$ is as defined in eq.~(\ref{qqq}), it is a symmetric equilibrium. 
 \end{theorem}
 \vspace{-5mm}\begin{proof}[Proof sketch]
Equation (\ref{qqq}) can be rewritten by integration by parts as
\beq
  \alpha(x) = \frac{v(x,x) - \int_{-\infty}^x L(s,x)\,{\rm d}t(s)}{1-\beta}.
\eeq
Using the affiliation assumptions, it can be shown that $L(\cdot,x)$ is decreasing in its first argument while $v(x,x)$ is increasing, implying that $\alpha$ is strictly increasing in $x$.
For a bidder with signal $x$, consider a deviation in which they submit a bid $b = \alpha(z)$ corresponding to signal $z$. 
Denote the expected payoff from this deviation $\Pi(\alpha(z),x)$. 
Differentiating $\Pi(\alpha(z),x)$ with respect to $z$ and applying Lemma~1, one finds that $\partial \Pi(\alpha(z),x)/\partial z$ has the same sign as $z-x$. 
Hence $\Pi(\alpha(z),x)$ is maximized at $z=x$, so bidding $b=\alpha(x)$ is optimal given that other bidders follow~$\alpha$.
Possible discontinuities in $\alpha$ can be treated by adapting arguments Milgrom and Weber  \cite{Milgrom}. Further details are given in Appendix~B.
\end{proof}

 \begin{theorem}
 	The expected revenue generated for the seller of an all-pay auction with fractional cost is always less than when $\beta = 1$ if $f(y|x)$ is increasing in $x$.
 \end{theorem}
 \vspace{-5mm}
 \begin{proof}[Proof sketch]

Let $e_\beta(x)$ denote the expected payment of a bidder with signal $x$ under forfeiture parameter $\beta$, and let $\alpha_\beta$ be the corresponding equilibrium bidding function. Substituting $\alpha_\beta$ into the payoff expression shows that
\[
e_\beta(x) = \int_{-\infty}^x v(s,s)\,f_{Y_1}(s\mid s)\,K_\beta(x,s)\,{\rm d}s,
\]
for an explicit kernel $K_\beta(x,s)$.
Using affiliation and Lemma~1, one can show that for $\beta\in[0,1]$ this kernel satisfies $0\leq K_\beta(x,s)\leq 1$ for all $s\leq x$, with $K_1(x,s)\equiv 1$ corresponding to the classic all-pay auction.
It follows that $e_\beta(x) \le e_1(x)$ for every $x$ in the support of $X_1$. 
By integrating this inequality with respect to the distribution of $X_1$ and multiplying by the number of bidders $n$, the result follows.
The full derivation is given in Appendix~C.
\end{proof}

\noindent
Notably, this shows that the expected amount paid by a bidder in an auction where $\beta \leq 1$ is at most the expected price paid by a bidder in the original all-pay auction. Since this holds for each bidder, it follows for the expected revenue earned by the seller as well.
 
 To summarize, in this section we have proved the equilibrium bidding strategy for the all-pay auction with fractional forfeit. Theorem~4 shows that, under affiliation and the monotonicity condition on $f_{Y_1}$, expected revenue is maximised at the full-forfeit case $\beta=1$ when we restrict attention to $\beta\in[0,1]$ (in principle, punitive forfeits $\beta>1$ may alter this conclusion but we do not consider these cases here). For $\beta\in[0,1]$ we also showed that each of these auctions does not generate as much revenue as the all-pay auction with complete bid forfeit.

From an economic perspective, the fractional-forfeit parameter $\beta$ measures the extent to which bids are dissipated when losing. 
When $\beta$ is close to zero, losers recover most of their bids and the auction behaves similarly to a first-price auction, with relatively low rent dissipation. 
As $\beta$ increases toward one, the effective cost of losing rises, bidders internalise the higher prospective loss, and equilibrium bids move toward those of the classic all-pay auction. 
The revenue comparison result shows that, within this linear forfeiture class and under the affiliation assumptions, the seller maximises expected revenue by choosing the full-forfeit case $\beta=1$, so partial reimbursement of bids unambiguously reduces revenue even though it softens the effective cost faced by losers.

\section{5.~Auctions with Exponential Forfeits}

We next consider the interesting case in which the losers must pay an exponentially large forfeit.
An exponential forfeit can be viewed as a stylised representation of environments in which the marginal cost of remaining in a contest increases very rapidly with the level or duration of engagement. 
One example is protracted litigation, where legal fees, management time, and reputational exposure compound over time, so that extending a lawsuit becomes disproportionately more costly at later stages than at earlier ones. 
Similarly, in military or geopolitical conflicts, such as long-running wars, the risks and resource commitments often escalate nonlinearly as parties become more deeply entrenched, with later stages involving substantially higher financial costs and existential risks than initial skirmishes. 
A further illustration is technological races such as the space race, in which maintaining or increasing effort at the technological frontier typically requires orders-of-magnitude increases in expenditure relative to early-stage investments. 
In all these cases, the key economic feature is that the effective cost of `losing late' is much more severe than the cost of `losing early,' and the exponential penalty serves as a tractable reduced-form way to capture such strongly convex escalation in the cost of losing.

The following expected payoff function describes an exponential forfeit
\vspace{2mm}
\begin{equation*}\begin{aligned}
W_i = \begin{cases}
V_i - b_i & \text{if } b_i > {\rm max}_{j \neq i} b_j, \\[2pt]
-\bigl(e^{b_i}-1\bigr)  & \text{if } b_i < {\rm max}_{j \neq i} b_j, \\[2pt]
\displaystyle \frac{V_i}{\#\{k:b_k = b_i\}} - b_i & \text{if } b_i = {\rm max}_{j \neq i} b_j~.
\end{cases}
\end{aligned}\end{equation*}
The normalisation for $b_i < {\rm max}_{j \neq i} b_j,$ ensures that a bidder who submits $b_i=0$ and loses incurs zero monetary loss, so a zero bid corresponds to zero effective outlay. 

This set-up is reminiscent of Baye, Kovenock \& de Vries' analysis of litigation systems \cite{litigation} in which they demonstrated that fee-shifting rules (such as the Quayle system) effectively amplify  legal costs as expenditures rise. Furthermore, their study of the Babylonian bridal auction \cite{herodotus} examined how extreme payment rules can induce highly skewed equilibrium behaviour. Thus the exponential forfeit above can be viewed as a stylised reduced form of such environments, in which losses become disproportionately large at high bids.

Suppose bidders $j \neq 1$ follow a symmetric increasing equilibrium strategy $\alpha$, then the expected payoff of bidder $1$ with bid $b$ and signal $x$ is
\begin{equation*}\begin{aligned}
\Pi(b,x)
= \int_{-\infty}^{\alpha^{-1}(b)} v(x,y) f_{Y_1} (y \mid x)\,{\rm d}y
- b\,F_{Y_1}(\alpha^{-1}(b)\mid x)
- \bigl(e^b-1\bigr)\bigl(1-F_{Y_1}(\alpha^{-1}(b)\mid x)\bigr).
\end{aligned}\end{equation*}
The bid that maximizes the payoff, is given by the critical point with respect to $b$, given by
\beq\notag &v(x,\alpha^{-1}(b))f_{Y_1}(\alpha^{-1}(b)|x)\frac{1}{\alpha'(\alpha^{-1}(b))} - F_{Y_1}(\alpha^{-1}(b)|x) - bf_{Y_1}(\alpha^{-1}(b)|x)\frac{1}{\alpha'(\alpha^{-1}(b))} \\  
&\hspace*{56mm} - e^b(1 - F_{Y_1}(\alpha^{-1}(b)|x)) + e^b\cdot f_{Y_1}(\alpha^{-1}(b)|x)\frac{1}{\alpha'(\alpha^{-1}(b))} = 0.\eeq 
As we derive in Appendix D, 
for $x$ such that $\alpha(x)$ is large, the terms multiplied by $e^{-\alpha(x)}$ are negligible, so to leading order
\beq
\alpha'(x) = \frac{f_{Y_1}(x\mid x)}{1-F_{Y_1}(x\mid x)} + O(e^{-\alpha(x)}).
\label{qwe}
\eeq
In the high-bid region the approximate strategy is
\beq
\alpha(x) \approx \int_{-\infty}^x \frac{f_{Y_1}(t\mid t)}{1-F_{Y_1}(t\mid t)}\,{\rm d}t.
\eeq
In this asymptotic regime the leading-order behaviour of $\alpha$ depends only on the distribution of opponents' signals, while valuations enter only through exponentially small correction terms.

As an explicit example, consider the case of two bidders where we denote  bidder 1's signal by $x$ and bidder 2's signal  by $y$, and suppose 
\beq\notag
f(x, y) = \frac{4}{5}(1 + xy),
\eeq
where $f$ is defined on $[0,1] \times [0,1]$. This implies that $f_{Y_1} (y|x) = \frac{2+2xy}{2 + x}$ and $F_{Y_1}(y|x) = \frac{2y + xy^2}{2 + x}$, thus 
\begin{equation*}\begin{aligned} \alpha(x) = \int_{0}^x \left(\frac{2 + 2t^2}{2 - t - t^3} \right){\rm d}t = \int_{0}^x \left( \frac{1}{1 - t} - \frac{t}{2 + t + t^2}\right) {\rm d}t.
\end{aligned}\end{equation*}
	This function behaves very similarly to $-\ln(1-x)$ since the second term in the integral is negligible. Notice that this function increases slowly at first but then begins to grow increasingly rapidly. This is indicative of the optimal strategy for successful parties in auctions with exponential forfeits.

Economically, the exponential forfeit can be viewed as an extreme form of convex cost of losing, in which the marginal loss associated with a higher bid grows very rapidly. 
In this environment, high types choose bids in a region where the exponential term dominates, and the approximate equilibrium becomes almost entirely driven by beliefs about opponents' signals rather than by the precise level of the underlying valuation $v$. 

Relative to first-price and standard all-pay auctions, the exponential forfeit produces a much steeper bid schedule for high signals and concentrates competition among a narrow set of top types. This illustrates how sufficiently convex forfeiture rules can generate near ``winner-take-all'' behaviour with very strong incentives at the top of the signal distribution.
At such bid levels, the trade-off governing the marginal bid decision is essentially between a slight reduction in the probability of incurring a very large loss and a slight increase in that loss, so the exact valuation $v(x,x)$ affects the optimal bid only at lower order.

The above properties hold in the high-bid region where the exponential penalty dominates; outside this regime, the dependence of equilibrium bids on valuations is more complex and is not characterised here. Valuations remain important for intermediate types and for determining which types ever enter the high-bid region, but the marginal incentives of sufficiently high types are dominated by the structure of the loss rather than by the gain from winning.

\section{6.~Concluding Remarks}
\vspace{-3mm}

In this work we have investigated the impact of changing the forfeit function in all pay auctions. We highlighted that the addition of a constant entrance fee does not affect the bidding strategy unless the fee is returned to the winner. When the forfeit is instead a fraction of the bid, we showed that the revenue generated by the seller is increasing with the fraction.  Lastly, when the forfeit is exponential, the bidding strategy quickly approaches infinity, and it was argued that successful bidders will be those who bid significantly more than the typical bid.

The analysis above permits a simple comparative-statics perspective that connects our results to the existing literature.  In the fractional-forfeit model, the parameter $\beta \in [0,1]$ interpolates smoothly between first-price auctions ($\beta=0$) and the classic all-pay auction ($\beta=1$).  For a fixed information structure, the equilibrium bid function $\alpha_\beta(x)$ and expected payments $e_\beta(x)$ move monotonically with $\beta$, and the revenue comparison theorem shows that seller revenue is maximised at the full-forfeit benchmark studied by Krishna and Morgan~\cite{Krishna}.  This provides a unified framework in which the revenue properties of first-price, all-pay, and intermediate forfeiture rules can be compared.

Contests with features closely related to the forfeiture rules we study arise in various applied settings, including lobbying and rent-seeking \cite{Baye}, electoral campaigns \cite{Snyder}, and research tournaments with sunk investments \cite{Siegel}.  Understanding how alternative forfeiture rules influence equilibrium behaviour, and revenue is therefore relevant both for theoretical auction design and for the interpretation of such real-world contests.

It would be interesting to consider all-pay auctions with a wider range of forfeit functions, such as logarithmic, polynomial, or constant functions. Likely, this is most readily implemented by considering fractional forfeits for all parties, with the forfeit value differing depending on the ranking of each party's bid, with the forfeits following a specified distribution. One might also explore the difference in results if the bidders are risk-averse rather than risk-neutral, or the effects of multiple prizes on the results, generalizing \cite{Barut} to different forfeit functions. 
Furthermore, we have restricted our attention to symmetric equilibria in symmetric environments.  A rich set of asymmetric equilibria can arise in complete-information all-pay contests with spillovers and heterogeneity, cf.~\cite{Siegel,Baye2012}. Extending our analysis of forfeiture rules to such asymmetric settings would be an interesting direction for future work.

Among the possible extensions above, two seem particularly connected to our analysis.  First, allowing for risk-averse bidders within the fractional-forfeit framework would reveal how changes in forfeiture rules interact with curvature of utility, and could be analysed by adapting the differential-equation methods developed here.  Second, introducing rank-dependent forfeits in multiple-prize contests, in the spirit of Barut~\cite{Barut}, would permit a systematic comparison of how different patterns of partial cost reimbursement shape rent dissipation across the full distribution of prizes.

\vspace{2mm}
{\bf Acknowledgements.} This research was undertaken as part of the MIT-PRIMES program.  JU is supported by NSF grant PHY-2209998 and is grateful for support from NSF grant  DMS-1440140 while in residence at the Mathematics Sciences Research Institute, Berkeley during Fall 2019.  

\appendix
\section{Appendix A:  Reproduction of the proof of Lemma 1}
\label{ApA}

{\bf  Lemma 1} (MW){\bf.} Suppose $f$ is affiliated, then $\frac{F_{Y_1}(x|z)}{f_{Y_1}(x|z)}$ is non-increasing in $z$. \newline

\vspace{-7mm} \begin{proof}
 By the affiliation inequality, for $\alpha \leq x$ and $z' \leq z$, 
 \beq \notag
 f_{Y_1}(\alpha|z)f_{Y_1}(x|z') \leq f_{Y_1}(\alpha|z')f_{Y_1}(x|z) \implies \frac{f_{Y_1}(\alpha|z)}{f_{Y_1}(x|z)} \leq \frac{f_{Y_1}(\alpha|z')}{f_{Y_1}(x|z')}. 
 \eeq
Integrating  both sides with respect to $\alpha$ from $-\infty$ to $x$ gives
 \beqn
 \frac{F_{Y_1}(x|z)}{f_{Y_1}(x|z)} \leq \frac{F_{Y_1}(x|z')}{f_{Y_1}(x|z')}.
 \eeqn
 \end{proof}

\vspace{-10mm}
\section{Appendix B: Proof of Theorem 3}
\label{ApB}

\setcounter{theorem}{2}
 \begin{theorem}
 	When $\alpha(x)$ is as defined in eq.~(\ref{qqq}), it is a symmetric equilibrium. 
 \end{theorem}
  \vspace{-5mm}
 \begin{proof} 
 Let $t(x) = v(x,x)$, then by integration by parts $\alpha(x)$ can be rewritten as
 	\beqn
		\alpha(x) = \frac{v(x,x) - \int_{-\infty}^x L(s,x){\rm d}t(s)}{1-\beta}.
	\eeqn
	  We adapt an argument in \cite{Milgrom}\footnote{Specifically, Theorem 14 of \cite{Milgrom} which studied a case corresponding to the $\beta=0$ case of our generalized set-up.} in which it was observed that $L(\alpha | x)$ is decreasing in $x$ and $v(x,x)$ is increasing, so $\alpha(x)$ will grow with increasing $x$. First suppose $\alpha(x)$ is continuous, then we can assume $\alpha(x)$ is differentiable without loss of generality by monotonically rescaling bidder estimates. To show that $\alpha(x)$ is the optimal bid, we find the maximum of the payoff function
 \beqn 
 \frac{\partial}{\partial b}\Pi(\alpha(z), x) = \frac{f_{Y_1}(z|x)}{\alpha'(z)}\left(v(x,z)-(1-\beta)\alpha(z) - \alpha'(z)\frac{ (1-\beta)F_{Y_1}(z|x)}{f_{Y_1}(z|x)}\right) - \beta.
 \eeqn
Applying Lemma 1 and the fact that $v(x,z)$ is increasing it follows that $\frac{\partial}{\partial b}\Pi(\alpha(z), x)$ has the same sign as $z-x$, this implies that $\Pi(\alpha(z), x)$ is maximized for $z = x$.

It remains to check cases in which  $\alpha$ is discontinuous at some point $x$. In this case for any positive $\epsilon$, the following expression is infinite
 \beqn 
 \int_x^{x + \epsilon}  \frac{(1-\beta)f_{Y_1}(s|s)}{\beta + (1-\beta)F_{Y_1}(s|s)} \sim \infty .
\eeqn
It then follows that
\beqn 
 \int_x^{x + \epsilon}  \frac{(1-\beta)f_{Y_1}(s|s)}{\beta + (1-\beta)F_{Y_1}(s|s)} 
 &\leq \int_x^{x + \epsilon}  \frac{f_{Y_1}(s|s)}{\beta F_{Y_1}(s|s) + (1-\beta) F_{Y_1}(s|s)} \\
 &= \int_x^{x + \epsilon}  \frac{f_{Y_1}(s|s)}{F_{Y_1}(s|s)} \\ 
 &\leq \int_x^{x + \epsilon}  \frac{f_{Y_1}(s|x+\epsilon)}{F_{Y_1}(s|x + \epsilon)} \\
 & = \ln(F_{Y_1}(x + \epsilon|x+\epsilon)) - \ln(F_{Y_1}(x|x+\epsilon)).
 \eeqn
For the last expression to be infinite, it is required that $F_{Y_1}(x|x + \epsilon) = 0$, which is a statement proved in Theorem 14 of \cite{Milgrom}. Therefore,  $\alpha(x)$ as given in eq.~(\ref{qqq}) is an equilibrium for this auction.
 \end{proof}

\section{Appendix C: Proof of Theorem 4}
\label{ApC}

\begin{theorem}
 	The expected revenue generated for the seller of an all-pay auction with fractional cost is always less than when $\beta = 1$ if $f(y|x)$ is increasing in $x$.
 \end{theorem}
  \begin{proof} Let $\alpha_\beta(x)$ be the equilibrium bid for a specific value of $\beta$. Notice that the expected payment of a bidder is 
 \beqn  e_\beta(x) & 
 = (F_{Y_1}(x|x) + \beta(1-F_{Y_1}(x|x)))\alpha_\beta(x) \\[5pt]
 &= \int_{-\infty}^x v(s,s)f_{Y_1}(s|s)\frac{\beta + (1-\beta)F_{Y_1}(x|x)}{\beta + (1-\beta)F_{Y_1}(s|s)} \exp \left(-\int_x^s \frac{(1-\beta)f_{Y_1}(t|t)}{\beta + (1-\beta)F_{Y_1}(t|t)}{\rm d}t \right) {\rm d}s.\eeqn
 Since $f_{Y_1}(y|x)$ is increasing in $x$, it follows that $\beta/f_{Y_1}(y|x)$ is decreasing in $x$. Combined with Lemma 1, this implies that  $\frac{f_{Y_1}(y|x)}{\beta + (1-\beta)F_{Y_1}(y|x)}$ is increasing in $x$ and therefore
  \beqn  - \int_s^x \frac{(1-\beta)f_{Y_1}(t|t)}{\beta + (1-\beta)F_{Y_1}(t|t)}{\rm d}t 
  & \leq -\int_s^x \frac{(1-\beta)f_{Y_1}(t|s)}{\beta + (1-\beta)F_{Y_1}(t|s)}{\rm d}t \\ 
& =  \ln(\beta + (1-\beta)F_{Y_1}(s|s)) - \ln(\beta + (1-\beta)F_{Y_1}(x|s)) \\ 
 &\leq \ln(\beta + (1-\beta)F_{Y_1}(s|s)) - \ln(\beta + (1-\beta)F_{Y_1}(x|x))~, \eeqn
where the last inequality comes from the fact that $F_{Y_1}(y|x)$ is non-increasing in $x$.
 It follows that 
 \beqn   e_\beta(x) & \leq \int_{-\infty}^x v(s,s)f_{Y_1}(s|s)\frac{\beta + (1-\beta)F_{Y_1}(x|x)}{\beta + (1-\beta)F_{Y_1}(s|s)}\exp\left(\ln\left(\frac{\beta + (1-\beta)F_{Y_1}(s|s)}{\beta + (1-\beta)F_{Y_1}(x|x)} \right)\right) \\
 & \leq \int_{-\infty}^x v(s,s)f_{Y_1}(s|s) = e_1(x).
\eeqn 
\end{proof}

\vspace{-10mm}
\section{Appendix D: Derivation of $\alpha'(x)$ in exponential case}
\label{ApD}

The bid that maximizes the payoff, is given by the critical point with respect to $b$, given by
\beq\notag &v(x,\alpha^{-1}(b))f_{Y_1}(\alpha^{-1}(b)|x)\frac{1}{\alpha'(\alpha^{-1}(b))} - F_{Y_1}(\alpha^{-1}(b)|x) - bf_{Y_1}(\alpha^{-1}(b)|x)\frac{1}{\alpha'(\alpha^{-1}(b))} \\  
&\hspace*{56mm} - e^b(1 - F_{Y_1}(\alpha^{-1}(b)|x)) + e^b\cdot f_{Y_1}(\alpha^{-1}(b)|x)\frac{1}{\alpha'(\alpha^{-1}(b))} = 0.\eeq 
	Multiplying both sides of the above equation  by $\alpha'(\alpha^{-1}(b))$ and evaluating for $\alpha(x) = b$ gives 
\begin{equation*}\begin{aligned}
e^{\alpha(x)} \alpha'(x) + (1 - e^{\alpha(x)})\alpha'(x)F_{Y_1}(x|x) + (\alpha(x) - e^{\alpha(x)})f_{Y_1}(x|x)= v(x,x) f_{Y_1}(x|x),
\end{aligned}\end{equation*}
which we rearrange to find an expression for $\alpha'$
\begin{equation*}\begin{aligned}
\alpha'(x) = \frac{(v(x, x) + e^{\alpha(x)} - \alpha(x))f_{Y_1}(x|x)}{e^{\alpha(x)} + (1-e^{\alpha(x)})F_{Y_1}(x|x)}.
\end{aligned}\end{equation*}
The differential equation which arises cannot be  solved analytically. However, for large bids one may consider the approximate behavior for in the large $b$ limit, which leads to the form stated in eq.~(\ref{qwe}) of the main text, namely
\begin{equation*}\begin{aligned} 
\alpha'(x) \approx \frac{f_{Y_1}(x|x)}{1 -F_{Y_1}(x|x)} + O(e^{-\alpha(x)}).
\end{aligned}\end{equation*}

\begingroup
\renewcommand{\section}[2]{}%

\endgroup

\end{document}